\newtheorem{example}{Example}
\newtheorem{theorem}{Theorem}
\newtheorem{corollary}{Corollary}
\newcommand{\ds}{\displaystyle}
\newcommand{\NN}{\mathbb{N}}
\newcommand{\RR}{\mathbb{R}}
\newcommand{\BigO}[1]{\mathrm{O}\left(#1\right)}
\newcommand{\E}[1]{\mathrm{E}\left[#1\right]}
\newcommand{\Geo}[1]{\mathrm{Geo}(#1)}
\newcommand{\EXP}[1]{\mathrm{Exp}\left(#1\right)}
\newcommand{\Beta}[2]{\mathrm{B}(#1,#2)}
\begin{document}
\title{Taking snapshots from a stream}  
\author{Dominik Bojko,\\
 \texttt{Dominik.Bojko@pwr.edu.pl},
\AND
Jacek Cicho\'{n},\\
 \texttt{Jacek.Cichon@pwr.edu.pl},\\[1cm]
 Wroclaw University of Science and Technology,\\
 Faculty of Fundamental Problems of Technology,\\
 pl. Grunwaldzki 13, 50-377 Wroclaw, Poland}
\date{June 2022}


\maketitle

\begin{abstract}
This work is devoted to a certain class of probabilistic snapshots for elements of the observed data stream. We show you how one can control their probabilistic properties and we show some potential applications.  
Our solution can be used to store information from the observed history with limited memory. It can be used for both web server applications and Ad hoc networks and, for example, for automatic taking snapshots from video stream online of unknown size. 
\end{abstract}



\section{Introduction}
Suppose that we are observing a stream $x_1,x_2,\ldots$ of data. Our goal is to keep an element from this stream with a prescribed position. For example, we may want to keep the element $x_i$ with index $i$ close to $\left\lfloor n/2\right\rfloor$ after reading first $n$ elements from the stream. Of course this problem is trivial if we have a direct access to all elements $x_1,\ldots,x_n$. But in the case of large amount of data keeping all information into memory is expensive or undesirable. Suppose hence that our memory resources are limited.  

In this article we investigates series of randomized procedures which allow us to choose elements located near the required position in the stream of data. All these procedures are based on the same schema. They only differs on a sequence $(\alpha_n)$ of probabilities which are used for control of changes of stored data. In each cases we will have $\alpha_1=1$. We will use three variables: $K$, $n$ and $data$. Initially we put $K=0$, $n=0$ and we set $data$ as nil value:

\begin{verbatim}
Initialization: K:= 0; n:= 0; data:= nil;
\end{verbatim}
After reading an element $x$ from the stream we call the following update procedure:
\begin{verbatim}
procedure Update(x)
  n:= n+1
  if (random() <= a(n)) 
  begin
    K:= 1;
    data: = x
  end else begin
    K:= K+1
  end
end
\end{verbatim}
In the procedure Update we used the function random() which is high quality pseudo-random generator of random reals from the interval $[0,1]$. The function $a(n)$ represents the probabilities sequence $(\alpha_n)$. We call the variable $K$ a \textbf{probabilistic snapshot} from data stream.

\paragraph{Connection with reservoir sampling}
This class of algorithms may be treated as a subclass of reservoir sampling algorithms with reservoir of size 1 (see \cite{tille2006sampling}). 
Suppose that we use this procedure with the sequence $\alpha_n = \frac{1}{n}$. In this case we obtain the classical Jeffrey Vitter's Algorithm R published in \cite{Vitter:1985:RSR:3147.3165}. Let $K_n$ denotes value of the random variable $K$ after reading $n$ items from the stream. It is well known that  in this case (i.e. when $\alpha_n=\frac{1}{n}$) we have $\Pr[K_n=i] = \frac1n$ for each $i \in \{1,\ldots,n\}$, i.e. that the random variable $K_n$ has the uniform distribution on the set $\{1,\ldots,n\}$. 
Therefore $E[K_n] = (n+1)/2$. 

\paragraph{Applications}
Here we show some possible applications of methods discussed in Section \ref{sec:Applications}:
\begin{itemize}
\item The solution proposed in this paper may by used for storing fixed number of snapshots from an observed movie of unknown length. For example we may want to store short samples of the movie taken at times closed to $0$, $\frac{1}{10}T$, $\frac{2}{10}T$ \ldots $\frac{9}{10}T$, $T$ from a movie of length $T$.    
\item We may need a sample of data from times close to $n$, $n-C$, $n-2\cdot C$,\ldots, $n-k\cdot C$, where $n$ is an index of current item, $C$ is a fixed distance and $k$ is a reasonably small natural number. 
\item We may need to observe a sample from stock market in such a way that the snapshots from the past should be less rare than snapshots from times close to the present.
\end{itemize}

\subsection{Mathematical notations and background}
\begin{itemize}
\item{We denote by $\E{X}$ the expected value of the random variable $X$.}
\item{Let us recall that a discrete random variable $X$ has geometric distribution with parameter $p \in [0,1]$ ($X \sim \Geo{p}$)
if $P[X=k] = (1-p)^{k-1}p$ for $k\geq 1$. If $X \sim \Geo{p}$ then $\E{X} = \frac{1}{p}$.}
\item{A random variable $Y$ has an exponential distribution with parameter $g$ ($Y \sim \EXP{\lambda}$) if its support is $[0,\infty)$ and
$\Pr[Y>x] = e^{-\lambda x}$ for each $x\geq 0$. If $Y \sim \EXP{\lambda}$ then $\E{Y} = \frac{1}{\lambda}$.}
\item{A random variable $Z$ has the Beta distribution with parameters $a,b>0$ ($Z \sim \Beta{a}{b}$) if its support is $[0,1]$ and
its density is given by the function $f(x) = \frac{\Gamma(a+b)}{\Gamma(a)\Gamma(b)}x^{a-1}(1-x)^{b-1}$, where $\Gamma(x)$ is the standard generalization of the factorial function. 
If $Z \sim \Beta{a}{b}$ then $\E{Z} = \frac{a}{a+b}$.}
\item{A sequence $X_1, X_2, \ldots$ of real-valued random variables is said to converge in distribution if $\lim _{n\to \infty }F_{n}(x)=F(x)$,
for every number $x\in\RR$ at which the function $F$ is continuous, where $F_n$ and $F$ are the cumulative distribution functions of random variables $X_n$ and $X$, respectively.}
\item{We denote by $H_n$ the $n$-th harmonic number, i.e. $H_n = \sum_{i=1}^{n}\frac{1}{i}$. We will use the following well known approximation $H_n = \ln(n) + \gamma + \BigO{\frac{1}{n}}$, where  $\gamma \approx 0.577$ is the Euler - Mascheroni constant.}
\item{We denote by $H_{n,a}$ the $n$-th harmonic number in power $a$, i.e. $H_{n,a}=\sum_{i=1}^{n}\frac{1}{i^a}$. If $a\neq 1$ and positive, then the following formula is useful:
\begin{gather*}
H_{n,a}-H_{k,a}=\frac{n^{1-a}-k^{1-a}}{1-a} + f^{(a)}(k) + \BigO{1}~,
\end{gather*}
as $n\rightarrow\infty$, where $|f^{(a)}(k)|\leq k^{-a}$.}
\item{In this paper $\zeta$ is Riemann's Zeta function, defined as $\zeta(z)=\sum_{n=1}^{\infty}n^{-z}$ (for $Re(z)>1$).}
\item{By $\mathrm{Ei}(x)$ we denote Exponential Integral, defined as $-\int_{-x}^{\infty}\frac{e^{-t}dt}{t}$. By $mathrm{E}_1(x)$ we denote the First Exponential Integral, defined as $\int_{1}^{\infty}\frac{e^{-xt}dt}{t}$. These functions have foregoing property $\mathrm{Ei}(x)=-\mathrm{E}_1(-x)$.}
\item{We will use the following inequalities (extension of the classical Weierstarass product inequality):
$$
  1 - \sum_{i=1}^{n} x_i \leq \prod_{i=1}^{n}(1-x_i)  < 1 - \sum_{i=1}^{n} x_i + \sum_{1\leq i< j \leq n} x_i x_j.
$$
which holds for any sequence $x_1, \ldots, x_n$ of real numbers from the interval $[0,1]$ (see \cite{10.2307/2688388}).}
\end{itemize}
\section{General Properties}

Let us fix sequence $\alpha_n$ of reals from interval $[0,1]$ such that $\alpha_1=1$.
Let us consider the sequence $K_n$ of consecutive integers tracing the value of the variable $K$ during the run of the Algorithm 1 with controlling sequence $\alpha_n$. Clearly we have $1\leq K_n \leq n$ for each $n$.

Notice that $\Pr[K_1=1]=1$, $\Pr[K_n = 1] = \alpha_n$ and
\begin{equation}
\label{eq:RecurrenceForKn}
\Pr[K_{n+1} = k+1] = (1-\alpha_n)\Pr[K_n=k]
\end{equation}
The following formula may be derived from the previous one
\begin{equation}
\label{eq:RecurrenceForKn1}
  \Pr[K_n =k] = \alpha_{n-k+1} \prod_{i=n-k+2}^{n}(1-\alpha_i)~.
\end{equation}
Moreover
\begin{equation}
\label{eq:RecurrenceForKn2}
 \Pr[K_n\geq k] = \prod_{i=n-k+2}^{n} (1-\alpha_i)
\end{equation}

\begin{example} 
\label{example:uniform}
Let us consider the case when $\alpha_n = \frac{1}{n}$. We claim that $\Pr[K_n=k] = \frac1n$ for each $n$ and $k\in\{1,\ldots,n\}$. Indeed, $\Pr[K_1=1] = 1$. Let us assume that $\Pr[K_n=k] = \frac1n$ for each $k$. 
Then $\Pr[K_{n+1}=1] = \frac{1}{n+1}$ and
$$
\Pr[K_{n+1} = k] = (1-\alpha_{n+1}) \Pr[K_n=k-1] = \left(1- \frac{1}{n+1}\right) \frac1n = \frac{n}{n+1}\frac1n = \frac{1}{n+1}~,
$$
for $k \in \{2,\ldots,n+1\}$.
Notice that we just reconstructed the mentioned result from \cite{Alon:1999:SCA:310293.310314}: the random variable $K_n$ in this case is uniformly distributed on the set $\{1,\ldots,n\}$. Therefore $\E{K_n}= \frac{1+n}{2}$, so the expected value of $K_n$ is precisely the middle of the set $\{1,\ldots,n\}$. 
\end{example}

The sequence $\E{K_n}$ satisfies simple linear first order difference equation: 

\begin{theorem}
$\E{K_{n+1}} = 1 + (1-\alpha_{n+1}) \E{K_n}$
\end{theorem}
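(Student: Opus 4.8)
The plan is to compute $\E{K_{n+1}}$ directly from the recurrence \eqref{eq:RecurrenceForKn} for the point probabilities, by splitting the expectation according to whether the update at step $n+1$ overwrites the stored data (the event of probability $\alpha_{n+1}$, on which $K_{n+1}=1$) or not (probability $1-\alpha_{n+1}$, on which $K_{n+1}=K_n+1$).

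First I would write
\[
  \E{K_{n+1}} = \sum_{k=1}^{n+1} k \cdot \Pr[K_{n+1}=k]
  = 1\cdot \alpha_{n+1} + \sum_{k=2}^{n+1} k\,(1-\alpha_{n+1})\,\Pr[K_n = k-1],
\]
using $\Pr[K_{n+1}=1]=\alpha_{n+1}$ and the recurrence $\Pr[K_{n+1}=k]=(1-\alpha_{n+1})\Pr[K_n=k-1]$ for $k\geq 2$. Next I would reindex the sum by $j=k-1$, obtaining $\sum_{j=1}^{n}(j+1)(1-\alpha_{n+1})\Pr[K_n=j]$, and split $(j+1)$ as $j+1$ to get $(1-\alpha_{n+1})\bigl(\sum_j j\Pr[K_n=j] + \sum_j \Pr[K_n=j]\bigr) = (1-\alpha_{n+1})\bigl(\E{K_n}+1\bigr)$, since $\Pr[K_n\le n]=1$. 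Adding back the $\alpha_{n+1}$ term gives $\alpha_{n+1} + (1-\alpha_{n+1})\E{K_n} + (1-\alpha_{n+1}) = 1 + (1-\alpha_{n+1})\E{K_n}$, which is the claim.

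There is essentially no obstacle here; the only point requiring a moment's care is the bookkeeping of summation indices and the normalization $\sum_{j=1}^n \Pr[K_n=j]=1$ (valid because $1\le K_n\le n$). Alternatively, one can phrase the same argument probabilistically via the law of total expectation conditioning on the indicator of the overwrite event at step $n+1$, which avoids index shuffling altogether and makes the structure of the recurrence transparent; I would likely present the computational version since the recurrence \eqref{eq:RecurrenceForKn} is already available.
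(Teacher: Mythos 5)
Your proposal is correct and follows essentially the same route as the paper's own proof: a direct computation of $\E{K_{n+1}}$ from the recurrence (\ref{eq:RecurrenceForKn}), reindexing the sum and using $\sum_k \Pr[K_n=k]=1$ to arrive at $1+(1-\alpha_{n+1})\E{K_n}$. In fact your bookkeeping is slightly cleaner, since the paper's displayed computation writes $\alpha_n$ where $\alpha_{n+1}$ is meant for the $k=1$ term, a harmless typo that your version avoids.
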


\begin{proof} Let $\beta_n = 1-\alpha_{n}$. The proof follows directly from the recurrence (\ref{eq:RecurrenceForKn}):
\begin{gather*}
\E{K_{n+1}} = 
\sum_{k=1}^{n+1} k \cdot p^{n+1}_k = 
\alpha_n + \sum_{k=2}^{n+1} k \cdot p^{n+1}_k = 
\alpha_n + \sum_{k=2}^{n+1} k \cdot (1-\alpha_{n+1})p^{n}_{k-1} = \\
\alpha_n + \beta_{n+1}\sum_{k=1}^{n} (k+1) \cdot p^{n}_{k} =
\alpha_n + \beta_{n+1}\sum_{k=1}^{n} k \cdot p^{n}_{k} + \beta_{n+1}\sum_{k=1}^{n} \cdot p^{n}_{k} = \\
\alpha_n + \beta_{n+1}\sum_{k=1}^{n} k \cdot p^{n}_{k} + \beta_{n+1} = 1 + (1-\alpha_{n+1})\E{K_n}~.
\end{gather*}
\end{proof}

We shall also use the variable $L_n = (n+1) - K_n$. From formula (\ref{eq:RecurrenceForKn1}) we immediately get
\begin{equation}
\label{eq:FormulaForLn}
\Pr[L_n = k] = \alpha_k \prod_{i=0}^{n-k-1}(1-\alpha_{n-i})
\end{equation}

\subsection{Monotonicity of expected value, related to $\alpha_n$}
\label{sec:mon}
Let's consider two sequences of probabilities of a saving of $n$-th data: $(\alpha_n)$ and $(\alpha'_n)$, one majorized by another, in particular: $(\forall n\in\NN) \quad 0\leq \alpha_n \leq \alpha'_n\leq 1$.
Random processes $(K_n)_{n\in\NN}$ and $(K'_n)_{n\in\NN}$ are corresponding to sequences $(\alpha_n)$ and $(\alpha'_n)$ respectively, f.e. $P(K_n=1)=\alpha_n$ and $P(K'_n=1)=\alpha'_n$.
We show that then $\E{K_n} \geq \E{K'_n}$.
The proof is a simple induction --- $\E{K_1} = \E{K'_1 = 1}$ for any sequences of probabilities.
Now, assume that  $\E{K_n} \geq \E{K'_n}$ for some $n\in\NN$.
Then $\E{K_{n+1}}=1+(1-\alpha_{n+1})\E{K_n}\geq 1+(1-\alpha'_{n+1})\E{K'_n}=\E{K'_{n+1}}$.

\section{Special cases}

In this section we will consider some examples of sequences $(\alpha_n)$ which may have applications for controlling the history of a massive streams of data.
The case $\alpha_n = \frac{1}{n}$ was analyzed in Example \ref{example:uniform} and the generated pointers (i.e. the values of the random variables $K_n$) may be treated as controllers of behavior of a stream at position $\frac{n}{2}$ after reading $n$ items. 
In a series of four subsections we shall analyze behavior of the random variable $K_n$ for sequences of the form 
$\alpha_n = \min\{1,\frac{g}{n^\alpha}\}$ for various fixed parameters $\alpha$. This is a summary of our results:
\begin{itemize}
\item if $\alpha = 0$ and $g\in(0,1)$ then the random variable $K_n$ converges in distribution to the geometric distribution $\mathrm{Geo}(g)$, hence $\E{K_n} \sim \frac{1}{g}$
\item if $\alpha\in(0,1)$ then the random variable $\frac{K_n}{n^\alpha}$ converges in distribution to exponential distribution $\mathrm{Exp}(g)$, hence $\E{K_n} \sim \frac{n^\alpha}{g}$
\item if $\alpha = 1$ then the random variable $\frac{K_n}{n}$ converges in distribution to the beta distribution $\mathrm{Beta}(1,g)$, hence $\E{K_n} \sim \frac{n}{g+1}$
\item if $\alpha \in (1,2)$ then $\E{L_n} = \frac{g}{2-\alpha} n^{2-\alpha}+\BigO{\max\{1,1\!\!1_{\{\frac{3}{2}\}}(\alpha)\ln{n},n^{3-2\alpha}\}}$
\item if $\alpha = 2$ then $\E{L_n} = g\ln(n) + \BigO{1}$ 
\item if $\alpha >2$ then $\E{L_n} = \BigO{1}$. 
\end{itemize}

\subsection{Fixed value}

Let us fix a real number $a > 1$ and let $\alpha_n = \frac{1}{a}$ for each $n>1$. In this case we have a closed formula for $E[K_n]$,
namely, from Eq. (\ref{eq:RecurrenceForKn1}) we immediately get
$$
\Pr[K_n = k ] = \begin{cases} \frac{1}{a} (1-\frac{1}{a})^{k-1} &: 1\leq k<n\\ (1-\frac{1}{a})^{n-1} &: k=n\end{cases}
$$
From this formula we deduce that
\begin{enumerate}
\item The sequence $(K_n)$ of random variables converges in distribution to the geometrical distribution with parameter $\frac1a$.
\item $E[K_n] = a \left(1-\left(\frac{a-1}{a}\right)^n\right)$
\end{enumerate}
 
Therefore the generated pointer may be used for controlling a behavior of a stream at a fixed position in the past.

\subsection{Sublinear Case}
\label{sec:sublinear}

In this section we consider the case when $\alpha_n = \min\{1,\frac{g}{n^\alpha}\}$ for some fixed $g>0$ and $\alpha \in (0,1)$.
We shall show that the normalized random variable $\frac{K_n}{n^{\alpha}}$ converges in distribution to the exponential $\EXP{g}$ distribution.
\begin{theorem} 
\label{sublinear:thm01}
Let $g>0$ and $\alpha\in(0,1)$. 
Let $\alpha_n = \min\{1,\frac{g}{n^\alpha}\}$ and let $x>0$. Then 
$$\lim_{n\to\infty} \Pr\left[\frac{K_n}{n^\alpha} \leq x\right] = 1- e^{-gx} ~.$$
\end{theorem}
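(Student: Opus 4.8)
The plan is to compute $\Pr[K_n \geq k]$ using the closed formula (\ref{eq:RecurrenceForKn2}), substitute $k = \lceil x n^\alpha \rceil$, and show the resulting product converges to $e^{-gx}$, which gives $\Pr[K_n/n^\alpha \leq x] = 1 - \Pr[K_n \geq \lceil xn^\alpha\rceil + 1] \to 1 - e^{-gx}$. The starting point is the identity
\begin{equation*}
\Pr[K_n \geq k] = \prod_{i=n-k+2}^{n}(1-\alpha_i) = \prod_{i=n-k+2}^{n}\left(1 - \frac{g}{i^\alpha}\right),
\end{equation*}
valid for $n$ large enough that all indices in the range exceed $g^{1/\alpha}$ (so the $\min$ is inactive). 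I would take logarithms: $\ln \Pr[K_n \geq k] = \sum_{i=n-k+2}^{n} \ln(1 - g i^{-\alpha})$.

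Next I would estimate this sum. Using $\ln(1-u) = -u + \BigO{u^2}$ uniformly for $u$ bounded away from $1$, the sum splits as $-g\sum_{i=n-k+2}^{n} i^{-\alpha} + \BigO{\sum_{i=n-k+2}^{n} i^{-2\alpha}}$. For the main term I would invoke the harmonic-power estimate from the preliminaries: $\sum_{i=n-k+2}^{n} i^{-\alpha} = H_{n,\alpha} - H_{n-k+1,\alpha} = \frac{n^{1-\alpha} - (n-k+1)^{1-\alpha}}{1-\alpha} + \BigO{(n-k)^{-\alpha}}$. With $k \approx xn^\alpha$, we have $n - k + 1 = n(1 - xn^{\alpha-1} + o(n^{\alpha-1}))$, and since $\alpha < 1$ the ratio $(n-k+1)/n \to 1$; a first-order Taylor expansion of $t \mapsto t^{1-\alpha}$ gives $n^{1-\alpha} - (n-k+1)^{1-\alpha} = (1-\alpha) n^{-\alpha} \cdot k \cdot (1 + o(1)) = (1-\alpha) x (1+o(1))$. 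Dividing by $1-\alpha$, the main term of the sum is $x(1+o(1))$, so $-g$ times it tends to $-gx$. For the error term, $\sum_{i \geq n-k+1} i^{-2\alpha}$ is $\BigO{n \cdot n^{-2\alpha}} = \BigO{n^{1-2\alpha}}$ when $\alpha < 1/2$ and $\BigO{\ln n}$-or-better issues only arise if $2\alpha \leq 1$; in all cases, since the number of summands is $\BigO{n^\alpha}$ and each term is $\BigO{n^{-2\alpha}}$ when the indices are $\Theta(n)$, the error is $\BigO{n^{\alpha - 2\alpha}} = \BigO{n^{-\alpha}} \to 0$. Hence $\ln \Pr[K_n \geq \lceil xn^\alpha\rceil] \to -gx$, and exponentiating and passing to the complement finishes the argument.

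The main obstacle I anticipate is handling the regime where $k$ is comparable to $n$ — but here that does not occur, precisely because $k \sim xn^\alpha = o(n)$ for $\alpha < 1$, so $n - k + 1$ stays $\Theta(n)$ and every term $i^{-\alpha}$ in the product is uniformly small, keeping the $\ln(1-u) \approx -u$ approximation valid throughout. The genuinely delicate point is making the Taylor expansion of $n^{1-\alpha} - (n-k+1)^{1-\alpha}$ rigorous with explicit control of the remainder: one must write $(n-k+1)^{1-\alpha} = n^{1-\alpha}(1 - (k-1)/n)^{1-\alpha}$ and expand $(1-t)^{1-\alpha} = 1 - (1-\alpha)t + \BigO{t^2}$ for $t = (k-1)/n \to 0$, so that $n^{1-\alpha} - (n-k+1)^{1-\alpha} = n^{1-\alpha}\bigl((1-\alpha)\tfrac{k-1}{n} + \BigO{(k/n)^2}\bigr) = (1-\alpha)(k-1)n^{-\alpha} + \BigO{k^2 n^{-1-\alpha}}$; with $k = \Theta(n^\alpha)$ the error is $\BigO{n^{2\alpha - 1 - \alpha}} = \BigO{n^{\alpha-1}} \to 0$. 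Finally I would note the convergence is of cumulative distribution functions at every continuity point $x > 0$ of $1 - e^{-gx}$ (all $x>0$), and that at $x \le 0$ both sides are $0$, completing the convergence-in-distribution claim.
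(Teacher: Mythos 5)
Your proof is correct, but it follows a different route from the paper's. Both arguments start from the product formula $\Pr[K_n \geq k] = \prod_{i=n-k+2}^{n}(1-g i^{-\alpha})$ with $k \approx x n^{\alpha}$, but the paper then uses a simple sandwich: since the factors are monotone in $i$, the product is squeezed between $\left(1-\frac{g}{n^{\alpha}}\right)^{k}$ and $\left(1-\frac{g}{(n-k+1)^{\alpha}}\right)^{k}$, and both bounds tend to $e^{-gx}$ by the classical limit $\left(1-\frac{c}{m}\right)^{mt}\to e^{-ct}$ — no logarithms or harmonic-sum asymptotics are needed. You instead take logarithms, expand $\ln(1-u)=-u+\BigO{u^2}$ (valid uniformly here because all indices are $\Theta(n)$, as you correctly stress), and evaluate $\sum_{i=n-k+2}^{n} i^{-\alpha}$ via generalized harmonic numbers and a Taylor expansion of $n^{1-\alpha}-(n-k+1)^{1-\alpha}$; this is essentially the technique the paper reserves for the linear case $\alpha=1$, applied here to the sublinear case. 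Your version is heavier but more quantitative (it gives an explicit $\BigO{n^{-\alpha}}+\BigO{n^{\alpha-1}}$ error on the log scale and transfers verbatim to $\alpha=1$), while the paper's squeeze is shorter and avoids sum estimates entirely. Two cosmetic points: your complement identity should read $\Pr[K_n/n^{\alpha}\leq x]=1-\Pr[K_n\geq \lfloor xn^{\alpha}\rfloor+1]$ (your $\lceil xn^{\alpha}\rceil+1$ is off by one unless $xn^{\alpha}$ is an integer, though a shift of $k$ by $1$ changes the log-product by $\BigO{n^{-\alpha}}$ and so is harmless), and the brief detour about the case $2\alpha\leq 1$ in the error estimate is superfluous — the bound you end with, number of summands $\BigO{n^{\alpha}}$ times term size $\BigO{n^{-2\alpha}}$, is the correct one in all cases.
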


\begin{proof}
Let $k=\lfloor n^\alpha x\rfloor$. Using formula (\ref{eq:RecurrenceForKn2}) we obtain
\begin{gather*}
\Pr\left[\frac{K_n}{n^\alpha} >x\right] = \Pr[K_n > n^\alpha x] = \Pr[K_n > k] = \prod_{i=n-k+1}^{n}\left(1-\frac{g}{i^\alpha}\right)
\end{gather*} 
(for sufficiently large $n$). Therefore
\begin{gather*}
\label{ineq01}
\Pr\left[\frac{K_n}{n^\alpha} >x\right] < \left(1- \frac{g}{n^\alpha}\right)^{k} \leq \left(1- \frac{g}{n^\alpha}\right)^{n^\alpha x - 1} 
\end{gather*} 
and
\begin{gather*}
\Pr\left[\frac{K_n}{n^\alpha} >x\right] > \left(1- \frac{g}{(n-k+1)^\alpha}\right)^{k} >
\left(1- \frac{g}{(n - \lfloor n^\alpha x\rfloor )^\alpha}\right)^{n^\alpha x} = \\
\left(1- \frac{g/(1-\lfloor n^\alpha x\rfloor/n)}{n^\alpha}\right)^{n^\alpha x} ~,
\end{gather*} 
so we see that both bounds on $\Pr\left[\frac{K_n}{n^\alpha} >x\right]$ converges to the same limit $e^{-gx}$ when $n$ tends to infinity.
\end{proof}

\begin{theorem}
If $g>0$, $\alpha\in(0,1)$ and  $\alpha_n = \min\{1,\frac{g}{n^\alpha}\}$ then $\lim_{n\to\infty} \E{\frac{K_n}{n^\alpha}} = \frac{1}{g}$.
\end{theorem}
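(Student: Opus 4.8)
The plan is to upgrade the convergence-in-distribution result of Theorem~\ref{sublinear:thm01} to convergence of expectations, which requires a uniform integrability argument since convergence in distribution alone does not imply convergence of means. Concretely, I would write $\E{\frac{K_n}{n^\alpha}} = \int_0^\infty \Pr\left[\frac{K_n}{n^\alpha} > x\right]\, dx$ using the tail-integral formula for the expectation of a nonnegative random variable. The integrand converges pointwise to $e^{-gx}$ by Theorem~\ref{sublinear:thm01}, and $\int_0^\infty e^{-gx}\,dx = \frac{1}{g}$, so the whole game is to justify passing the limit inside the integral.

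The key step is therefore to produce an integrable dominating function. From the first displayed inequality in the proof of Theorem~\ref{sublinear:thm01} we already have, for each fixed $x>0$ and all sufficiently large $n$, the bound $\Pr\left[\frac{K_n}{n^\alpha} > x\right] < \left(1-\frac{g}{n^\alpha}\right)^{n^\alpha x - 1}$. The obstacle is that this bound depends on $n$ and that ``sufficiently large $n$'' depends on $x$ (we need $k = \lfloor n^\alpha x\rfloor \le n$, i.e. roughly $x \le n^{1-\alpha}$). First I would handle the small-$x$ regime: using $\left(1-\frac{g}{n^\alpha}\right)^{n^\alpha} \le e^{-g}$ for large $n$ (more precisely $(1-t)^{1/t}\le e^{-1}$ for $t\in(0,1)$), one gets $\Pr\left[\frac{K_n}{n^\alpha} > x\right] \le C e^{-gx}$ on any bounded range $x \in [0, M]$ for a constant $C$ independent of $n$; this gives a dominating function on compacts, which combined with pointwise convergence and dominated convergence yields $\int_0^M \Pr\left[\frac{K_n}{n^\alpha} > x\right]\,dx \to \int_0^M e^{-gx}\,dx$.

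The remaining, and main, difficulty is the tail $x > M$: I must show $\int_M^\infty \Pr\left[\frac{K_n}{n^\alpha} > x\right]\,dx$ is small uniformly in $n$. Here I would split at $x = n^{1-\alpha}$: for $x \ge n^{1-\alpha}$ the event is empty since $K_n \le n$, so that part of the integral vanishes; for $M \le x \le n^{1-\alpha}$, the bound $\left(1-\frac{g}{n^\alpha}\right)^{n^\alpha x - 1}$ still holds, and since $n^\alpha x - 1 \ge n^\alpha M - 1 \to \infty$, we can bound $\left(1-\frac{g}{n^\alpha}\right)^{n^\alpha x} \le e^{-g' x}$ for a fixed $g' \in (0,g)$ once $n$ is large (using $-\ln(1-t) \ge t$ or a Taylor estimate to absorb the discrepancy, valid because $g/n^\alpha \to 0$). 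Then $\int_M^\infty e^{-g'x}\,dx = \frac{1}{g'}e^{-g'M}$, which tends to $0$ as $M\to\infty$, uniformly in $n$. Combining the compact-range limit with the uniformly-small tail, a standard $\varepsilon/3$ argument gives $\lim_{n\to\infty}\E{\frac{K_n}{n^\alpha}} = \int_0^\infty e^{-gx}\,dx = \frac{1}{g}$. The one point demanding care is making the constant $g'$ and the threshold on $n$ genuinely independent of $x$, which is exactly why the split at $x = M$ and at $x = n^{1-\alpha}$ is needed.
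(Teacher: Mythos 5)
Your proof is correct, but it takes a different route from the paper's. The paper splits the claim into two one-sided bounds: the lower bound $\frac{1}{g}\le\liminf_n \E{\frac{K_n}{n^\alpha}}$ comes from Theorem \ref{sublinear:thm01} together with Fatou's lemma, and the upper bound comes from plugging the tail estimate into the tail-integral formula and evaluating it \emph{exactly}: $\E{\frac{K_n}{n^\alpha}}\le\int_0^\infty\bigl(1-\frac{g}{n^\alpha}\bigr)^{n^\alpha t-1}dt=\frac{-1}{(n^\alpha-g)\ln\bigl(1-\frac{g}{n^\alpha}\bigr)}\to\frac{1}{g}$, with no dominated-convergence or uniform-integrability argument at all. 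You instead prove two-sided convergence of the tail integral directly via a dominating function and an $\varepsilon/3$ split, which is sound; moreover, your own argument can be streamlined, because the bound you worry about is in fact uniform in $x$: for $n^\alpha\ge 2g$ every factor satisfies $1-\alpha_i\le 1-\frac{g}{n^\alpha}$ for $i\le n$, and the probability is zero once $\lfloor n^\alpha x\rfloor> n$, so $\Pr\bigl[\frac{K_n}{n^\alpha}>x\bigr]\le\bigl(1-\frac{g}{n^\alpha}\bigr)^{n^\alpha x-1}\le 2e^{-gx}$ holds for \emph{all} $x>0$ simultaneously; a single application of dominated convergence then finishes the proof, making the split at $M$, the threshold $x=n^{1-\alpha}$, and the auxiliary constant $g'$ unnecessary. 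The trade-off: the paper's computation is shorter (a closed-form integral plus Fatou), while your uniform-integrability route is more standard and self-contained, gives the lower bound for free without invoking Fatou, and generalizes more readily when the bounding integral has no closed form.
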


\begin{proof}
From Theorem \ref{sublinear:thm01} and Fatou's Lemma we get $\frac{1}{g} \leq \liminf_{n\to\infty} \E{\frac{K_n}{n^\alpha}}$.
On the other hand, inequality (\ref{ineq01}) bears (for sufficiently large $n>g^{1/\alpha}$):
$$
\E{\frac{K_n}{n^{\alpha}}}=\int\limits_{0}^{\infty}\Pr\left[\frac{K_n}{n^{\alpha}}>t \right] dt\leq \int\limits_{0}^{\infty}\left(1-\frac{g}{n^{\alpha}}\right)^{n^{\alpha}t-1}dt=-\frac{1}{(n^{\alpha}-g)\ln(1-\frac{g}{n^{\alpha}})}\stackrel{n\rightarrow\infty}{\longrightarrow}\frac{1}{g}~.
$$
\end{proof}

\begin{corollary}
If $g>0$ and  $\alpha_n = \min\{1,\frac{g}{n}\}$ then $\E{K_n} = \frac{n^\alpha}{g} + \mathrm{o}(n^\alpha)$.
\end{corollary}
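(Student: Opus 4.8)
The plan is to recognise that this corollary is merely a restatement of the theorem immediately preceding it, recast in asymptotic notation, so essentially no new work is required. First I would note that $n^{\alpha}$ is deterministic, so by linearity of expectation $\E{\frac{K_n}{n^{\alpha}}} = \frac{\E{K_n}}{n^{\alpha}}$, and the preceding theorem says precisely that this sequence converges to $\frac{1}{g}$ as $n\to\infty$. (I would also silently correct the controlling sequence in the statement to read $\alpha_n = \min\{1,\frac{g}{n^{\alpha}}\}$, matching the hypothesis of the theorem being invoked.)

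Next I would unwind the definition of convergence. Setting $\varepsilon_n := \frac{\E{K_n}}{n^{\alpha}} - \frac{1}{g}$, the preceding theorem is exactly the assertion $\varepsilon_n \to 0$, i.e. $\varepsilon_n = \mathrm{o}(1)$. Multiplying the identity $\frac{\E{K_n}}{n^{\alpha}} = \frac{1}{g} + \varepsilon_n$ through by $n^{\alpha}$ gives $\E{K_n} = \frac{n^{\alpha}}{g} + \varepsilon_n n^{\alpha}$, and since $\alpha\in(0,1)$ the factor $n^{\alpha}$ diverges, so $\varepsilon_n n^{\alpha} = \mathrm{o}(n^{\alpha})$ by definition of little-$\mathrm{o}$. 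This yields exactly $\E{K_n} = \frac{n^{\alpha}}{g} + \mathrm{o}(n^{\alpha})$.

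There is essentially no obstacle here: the genuine analytic content — the two-sided bound on $\Pr\left[\frac{K_n}{n^{\alpha}} > x\right]$ from Theorem \ref{sublinear:thm01}, together with the Fatou-lemma lower bound and the integral upper bound on the expectation — has already been carried out in the two theorems above. The only point demanding a moment's care is that a convergence statement about a \emph{normalized} expectation translates into an \emph{additive} $\mathrm{o}(n^{\alpha})$ error term precisely because the normalizing factor tends to infinity; the hypothesis $\alpha\in(0,1)$ is what guarantees this (for $\alpha=0$ the rephrasing would be vacuous).
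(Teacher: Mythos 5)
Your proposal is correct and coincides with the paper's (implicit) treatment: the corollary is stated there without proof, being exactly the rephrasing of the preceding theorem that you give, including the silent correction of the misprinted controlling sequence to $\alpha_n=\min\{1,\frac{g}{n^{\alpha}}\}$. One small quibble: the divergence of $n^{\alpha}$ is not actually needed, since $\varepsilon_n=\mathrm{o}(1)$ alone gives $\varepsilon_n n^{\alpha}=\mathrm{o}(n^{\alpha})$ by definition, but this does not affect correctness.
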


\paragraph{Remark} We know a more precise results in some special cases. For example, if $\alpha_n = \frac{1}{\sqrt{n}}$ then
$\E{K_n} = \sqrt{n} - \frac12 +\frac{1}{2\sqrt{n}} + \frac{1}{8n} +\BigO{\frac{1}{n^{3/2}}}$.

\subsection{Linear Case}
\label{sec:linear}

In this section we consider the case when $\alpha_n = \min\{1,\frac{g}{n}\}$ for some fixed $g>0$.
We shall show that the normalized random variable $\frac{K_n}{n}$ converges in distribution to the Beta(1,g) distribution.

\begin{theorem} Let $g>0$. Let $\alpha_n = \min\{1,\frac{g}{n}\}$ and let $x\in (0,1)$. Then 
$$\lim_{n\to\infty} \Pr\left[\frac{K_n}{n} \leq x\right] = 1 - (1-x)^g ~.$$
\end{theorem}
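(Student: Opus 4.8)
The plan is to mimic the proof of Theorem \ref{sublinear:thm01}, replacing the exponential limit by the power function. Fix $x \in (0,1)$ and set $k = \lfloor nx \rfloor$. By formula (\ref{eq:RecurrenceForKn2}) we have, for all $n$ large enough that $\alpha_i = g/i$ on the relevant range,
\begin{equation*}
\Pr\left[\frac{K_n}{n} > x\right] = \Pr[K_n \geq k+1] = \prod_{i=n-k+1}^{n}\left(1-\frac{g}{i}\right).
\end{equation*}
So everything reduces to evaluating the asymptotics of this finite product as $n\to\infty$.

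The key step is to take logarithms and estimate $\sum_{i=n-k+1}^{n}\ln(1-g/i)$. Using $\ln(1-g/i) = -g/i + \BigO{1/i^2}$, the sum equals $-g\sum_{i=n-k+1}^{n}\frac1i + \BigO{\sum_{i=n-k+1}^{n} i^{-2}}$. The error term is $\BigO{1/(n-k)} = \BigO{1/(n(1-x))} \to 0$. The main term is $-g(H_n - H_{n-k})$, and since $H_n = \ln n + \gamma + \BigO{1/n}$ we get $H_n - H_{n-k} = \ln\frac{n}{n-k} + \BigO{1/(n-k)}$. Now $n - k = n - \lfloor nx\rfloor = n(1-x) + \BigO{1}$, so $\frac{n}{n-k} \to \frac{1}{1-x}$, giving $\sum \ln(1-g/i) \to -g\ln\frac{1}{1-x} = g\ln(1-x)$. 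Hence $\Pr[K_n/n > x] \to (1-x)^g$, and therefore $\Pr[K_n/n \leq x] \to 1 - (1-x)^g$, as claimed. (Alternatively, one can sandwich the product between $(1-g/(n-k+1))^k$ and $(1-g/n)^k$ exactly as in the sublinear case and check both bounds tend to the same limit; the logarithmic route is cleaner here because $k$ is linear in $n$.)

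I would also note the boundary behaviour for completeness: the limiting cdf $x \mapsto 1-(1-x)^g$ on $(0,1)$ is precisely the cdf of $\Beta{1}{g}$, since its density is $g(1-x)^{g-1} = \frac{\Gamma(1+g)}{\Gamma(1)\Gamma(g)}x^{0}(1-x)^{g-1}$; it is continuous on all of $\RR$ once extended by $0$ below and $1$ above, so convergence in distribution holds everywhere, not merely on $(0,1)$.

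The only mild obstacle is bookkeeping the floor functions and the "for sufficiently large $n$" caveat: one must ensure $n - k + 1 \geq g$ (so that none of the factors $1 - g/i$ is negative or the truncation $\min\{1,\cdot\}$ interferes), which holds for $n > g/(1-x)$ roughly; and one must keep track that $k$ and $n-k$ both grow linearly so that the $\BigO{1}$ discrepancies from the floors are negligible inside the logarithm. None of this is hard — it is the same pattern as Theorem \ref{sublinear:thm01}, only with a genuinely different limit because here $k/n$ does not vanish.
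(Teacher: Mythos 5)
Your proposal is correct and follows essentially the same route as the paper: reduce to the product $\prod_{i=n-k+1}^{n}(1-g/i)$ via formula (\ref{eq:RecurrenceForKn2}) with $k=\lfloor nx\rfloor$, take logarithms, identify the main term $-g(H_n-H_{n-k})\to g\ln(1-x)$, and show the second-order contribution is $\BigO{1/(n-k)}=\BigO{1/n}$ (the paper does this via the full series expansion of $\ln\frac{1}{1-g/i}$, you via the Taylor remainder bound, which is the same bookkeeping). No gaps worth noting.
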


\begin{proof}
Let $k=\lfloor nx\rfloor$. Then we have
\begin{gather*}
\Pr\left[\frac{K_n}{n} >x\right] = \Pr[K_n > n x] = \Pr[K_n > k] = \prod_{i=n-k+1}^{n}(1-\alpha_i)~.
\end{gather*} 
Therefore, for sufficiently large $n$, we have
\begin{gather*}
\Pr\left[\frac{K_n}{n} >x\right] = \prod_{i=n-k+1}^{n}\left(1-\frac{g}{i}\right) ~.
\end{gather*} 
Hence
\begin{gather*}
\ln\left(\Pr\left[\frac{K_n}{n} >x\right]\right)  = -\sum_{i=n-k+1}^{n}\ln\left(\frac{1}{1-\frac{g}{i}}\right) = 
- \sum_{i=n-k+1}^{n} \sum_{a\geq 1} \frac{1}{a} \left(\frac{g}{i}\right)^a = \\
- g\sum_{i=n-k+1}^{n} \frac{1}{i} - \sum_{a\geq 2}\frac{g^a}{a} \sum_{i=n-k+1}^{n} \frac{1}{i^a}~.
\end{gather*}
Notice that $\sum_{i=n-k+1}^{n} \frac{1}{i} = H_{n}-H_{n-k}$, and
$$
-g(H_{n}-H_{n-k}) = -g\left(\ln(n)-\ln(n-k) + \BigO{\frac{1}{n}} + \BigO{\frac{1}{n-\lfloor nx\rfloor}}\right) = 
\ln\left(\frac{n-\lfloor nx\rfloor}{n}\right)^g +O\left(\frac{1}{n}\right)
$$
Therefore $\lim_{n\to\infty}(-g(H_{n}-H_{n-\lfloor nx\rfloor})) = \ln((1-x)^g)$. Let
$A_{n,k} = \sum_{a\geq 2}\frac{g^a}{a} \sum_{i=n-k+1}^{n} \frac{1}{i^a}$.
Then
\begin{gather*}
0<A_{n,k} < \sum_{a\geq 2} g^a \sum_{i=n-k+1}^{\infty} \frac{1}{i^a} = 
\sum_{i=n-k+1}^{\infty}\sum_{a\geq 2}  \frac{g^a}{i^a} =
\sum_{i=n-k+1}^{\infty} \frac{g^2}{i^2} \frac{1}{1 - \frac{g}{i}} < 2g^2 \sum_{i=n-k+1}^{\infty} \frac{1}{i^2}
\end{gather*}
(the last inequality holds for $n> \frac{2g}{1-x}$), hence
$$
A_{n,k} < 2 g^2 \sum_{i=n-k+1}^{\infty} \frac{1}{i(i-1)} = \frac{2g^2}{n-k} = \frac{2g^2}{n-\lfloor nx\rfloor} = \BigO{\frac{1}{n}}~. 
$$
Thus
$$
\ln\left(\Pr\left[\frac{K_n}{n} >x\right]\right) = \ln\left(1-\frac{\lfloor nx\rfloor}{n}\right)^g + \BigO{\frac{1}{n}}.
$$
Therefore 
$$
\Pr\left[\frac{K_n}{n} >x\right] = \left(1-\frac{\lfloor nx\rfloor}{n}\right)^g e^{\BigO{\frac{1}{n}}} = 
\left(1-\frac{\lfloor nx\rfloor}{n}\right)^g +\BigO{\frac{1}{n}} ~, 
$$
so $\lim_{n\to\infty}\Pr\left[\frac{K_n}{n} >x\right] = (1-x)^g$.
\end{proof}

\begin{corollary}
If $g>0$ and  $\alpha_n = \min\{1,\frac{g}{n}\}$ then $\lim_{n\to\infty} \E{\frac{K_n}{n}} = \frac{1}{g+1}$.
\end{corollary}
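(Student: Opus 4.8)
The plan is to obtain the convergence of means directly from the convergence in distribution just proved, exploiting that $K_n/n$ is confined to the bounded interval $[0,1]$, so that no delicate tail estimates are needed (unlike the sublinear case, where Fatou only gave one inequality and a separate explicit bound was required).

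First I would rewrite the expectation via the tail-sum (layer-cake) formula: since $0\le K_n/n\le 1$,
\[
\E{\frac{K_n}{n}}=\int_0^\infty \Pr\left[\frac{K_n}{n}>t\right]dt=\int_0^1 \Pr\left[\frac{K_n}{n}>t\right]dt .
\]
Next I would note that the limiting cumulative distribution function $F(x)=1-(1-x)^g$ from the preceding theorem is continuous on all of $\RR$, so $\Pr[K_n/n\le t]\to F(t)$ holds at \emph{every} point $t$; equivalently, $\Pr[K_n/n>t]\to (1-t)^g$ for every $t\in[0,1)$, while both sides vanish for $t\ge 1$. Hence the integrands $t\mapsto \Pr[K_n/n>t]\,\mathbf{1}_{[0,1]}(t)$ converge pointwise to $t\mapsto (1-t)^g\,\mathbf{1}_{[0,1]}(t)$ and are all dominated by the integrable function $\mathbf{1}_{[0,1]}$.

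Then the bounded (dominated) convergence theorem yields
\[
\lim_{n\to\infty}\E{\frac{K_n}{n}}=\int_0^1 (1-t)^g\,dt=\frac{1}{g+1},
\]
which is exactly $\E{Z}$ for $Z\sim\Beta{1}{g}$, as recorded in the preliminaries. The only point that needs care is the interchange of limit and integral, and that is painless here precisely because of the uniform bound $K_n/n\le 1$. If one prefers not to invoke the distributional statement as a black box, the same computation runs on the exact product $\Pr[K_n/n>t]=\prod_{i=n-\lfloor nt\rfloor+1}^{n}\left(1-\frac{g}{i}\right)$ coming from (\ref{eq:RecurrenceForKn2}), whose pointwise limit $(1-t)^g$ is established inside the previous proof; the main obstacle, such as it is, remains the justification of passing to the limit under the integral sign, handled by boundedness.
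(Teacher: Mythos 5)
Your proof is correct and follows essentially the same route as the paper: both exploit the uniform bound $0\le K_n/n\le 1$ to upgrade the convergence in distribution to the $\Beta{1}{g}$ law into convergence of the first moment, and then evaluate the limiting mean $\int_0^1(1-t)^g\,dt=\frac{1}{g+1}$. The only difference is that the paper cites the general fact that convergence in distribution of a uniformly bounded sequence implies convergence of moments (Billingsley), whereas you prove that step by hand via the layer-cake formula and dominated convergence, which is a valid, self-contained substitute.
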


\begin{proof}
Notice that $\frac{K_n}{n} \leq 1$, hence the sequence $(\frac{K_n}{n})_{n\in\NN}$ is bounded, therefore the convergence in distribution of the sequence $(\frac{K_n}{n})_{n\in\NN}$ to a random variable $Y$ with Beta(1,g) distribution implies the convergence of moments (see e.g. \cite{Billingsley}), hence
$$
\lim_n \E{\frac{K_n}{n}} = \int_{0}^{1}x \frac{d}{dx}\left(1-(1-x)^g\right) dx = \frac{1}{1+g}~.
$$
 
\end{proof}

\begin{corollary}
If $g>0$ and  $\alpha_n = \min\{1,\frac{g}{n}\}$ then $\E{K_n} = \frac{n}{g+1} + \mathrm{o}(n)$.
\end{corollary}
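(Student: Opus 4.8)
The plan is to deduce this immediately from the previous corollary, which states that $\lim_{n\to\infty}\E{\frac{K_n}{n}} = \frac{1}{g+1}$. By linearity of expectation we have $\E{\frac{K_n}{n}} = \frac{1}{n}\E{K_n}$, so the limit statement rewrites as $\E{K_n} = n\left(\frac{1}{g+1} + \varepsilon_n\right)$ where $\varepsilon_n \to 0$. Setting $r_n = n\varepsilon_n$, this gives $\E{K_n} = \frac{n}{g+1} + r_n$ with $r_n = \mathrm{o}(n)$, which is precisely the claim.

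Concretely, I would write: from the previous corollary, $\E{\frac{K_n}{n}} - \frac{1}{g+1} \to 0$; multiplying through by $n$ yields $\E{K_n} - \frac{n}{g+1} = n\left(\E{\frac{K_n}{n}} - \frac{1}{g+1}\right)$, and the right-hand side is $n$ times a quantity tending to $0$, hence $\mathrm{o}(n)$ by definition. There is no obstacle here — the statement is just the asymptotic reformulation of the convergence-of-means corollary, and the only thing to be careful about is that the convergence of $\E{K_n/n}$ was already justified (via boundedness of $K_n/n$ and convergence in distribution to $\mathrm{Beta}(1,g)$, which forces convergence of moments), so nothing further needs to be proved.
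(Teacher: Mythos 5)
Your deduction is correct and is exactly what the paper intends: the corollary is an immediate restatement of the preceding one ($\E{K_n/n}\to\frac{1}{g+1}$, itself obtained from convergence in distribution to $\mathrm{Beta}(1,g)$ plus boundedness), multiplied through by $n$. Nothing further is needed.
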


\paragraph{Remark} A slightly more complicated calculus shows that in the case considered in this section we have
$\E{K_n} = \frac{n+1}{g+1} + \BigO{\frac{1}{n^g}}$.

\subsection{Subquadratic Case}
In this section we consider the case when $\alpha_n = \min\{1,\frac{g}{n^\alpha}\}$ for some fixed $g>0$ and $\alpha \in (1,2)$. We introduce an auxilliary random variable $L_n:=n+1 - K_n$.
Let's define $g_0=\left\lceil g^{\frac{1}{a}}\right\rceil-1$.
Realize that $\alpha_k=\frac{g}{k^{\alpha}}$ for $k\geq g_0 +1$. Thus we briefly see that $Pr\left[L_n=k\right]=0$, whenever $k \leq g_0$ and
\begin{gather*}
\Pr\left[L_n=k\right]=
\ds{\frac{g}{k^{\alpha}}\prod\limits_{i=0}^{n-k-1}\left(1-\frac{g}{(n-i)^{\alpha}}\right)}~
\end{gather*}
otherwise.\\
In foregoing calculations let $k > g_0$.
At commencement we approach a lower bound, using standard Weierstrass' product inequality (ref??): 
\begin{gather}
\label{ineq01}
\Pr\left[L_n=k\right]\geq\frac{g}{k^{\alpha}}\left(1-\sum\limits_{i=k+1}^{n}\frac{g}{i^{\alpha}}\right)=\frac{g}{k^{\alpha}}\left(1-g(H_{n,\alpha}-H_{k,\alpha})\right)~.
\end{gather}
To find the upper limitation, we apply more precise extension of Weierstrass' product inequality (ref??): 
\begin{gather}
\Pr\left[L_n=k\right]\leq\frac{g}{k^{\alpha}}\left(1-\sum\limits_{i=k+1}^{n}\frac{g}{i^{\alpha}}+\sum_{i=k+1}^{n-1}\sum_{j=i+1}^{n}\frac{g^2}{(ij)^{\alpha}}\right)\leq\frac{g}{k^{\alpha}}\left(1-\sum\limits_{i=k+1}^{n}\frac{g}{i^{\alpha}} +\frac{1}{2}\sum_{i=k+1}^{n}\sum_{j=k+1}^{n}\frac{g^2}{(ij)^{\alpha}}\right)~.
\end{gather}
Hence, by repeating the same trick as in (\ref{ineq01}) we attain:
\begin{gather*}
\Pr\left[L_n=k\right]\leq\frac{g}{k^{\alpha}}\left(1-g(H_{n,\alpha}-H_{k,\alpha}) +\frac{g^2}{2}(H_{n,\alpha}-H_{k,\alpha})^2\right)~.
\end{gather*}

We would like to find close confinements for an expected value of $L_n$.\\
We consider the limitation which is easier to analyze ---
\begin{align*}
\E{L_n}&\geq \sum_{k=g_0+1}^{n}\frac{g}{k^{\alpha-1}}\left(1-g(H_{n,\alpha}-H_{k,\alpha})\right)=g(H_{n,\alpha-1}-H_{g_0,\alpha-1})- g^2 \sum_{k=1}^{n}\frac{\frac{k^{1-\alpha}-n^{1-\alpha}}{\alpha-1}+f^{(a)}(k)+\BigO{1}}{k^{\alpha-1}}\\
&\geq g\frac{n^{2-\alpha}}{2-\alpha}-g^2\left[\frac{(H_{n,2\alpha-2}-H_{g_0,2\alpha-2})}{\alpha-1}-\frac{n^{1-\alpha}}{\alpha-1}(H_{n,\alpha-1}-H_{g_0,\alpha-1})\right]+\BigO{1}\geq\\
&\geq \frac{gn^{2-\alpha}}{2-\alpha}-g^2\left[-\frac{n^{3-2\alpha}}{(\alpha-1)(2-\alpha)}+\frac{(H_{n,2\alpha-2}-H_{g_0,2\alpha-2})}{\alpha-1}\right]+\BigO{1}=\\
&=\left\{\begin{array}{ll}
\frac{gn^{2-\alpha}}{2-\alpha}-\frac{g^2n^{3-2\alpha}}{(3-2\alpha)(\alpha-1)}+\frac{g^2n^{3-2\alpha}}{(\alpha-1)(2-\alpha)} +\BigO{1}=\frac{gn^{2-\alpha}}{2-\alpha}-\frac{g^2n^{3-2\alpha}}{(3-2\alpha)(2-\alpha)}+\BigO{1}&;\quad\text{for } \alpha<\frac{3}{2},\\
\frac{gn^{2-\alpha}}{2-\alpha}-\frac{g^2 H_n}{\alpha-1} +\BigO{1}&;\quad\text{for } \alpha=\frac{3}{2},\\
\frac{gn^{2-\alpha}}{2-\alpha}+\BigO{1}&;\quad\text{for } \alpha>\frac{3}{2},
\end{array}\right.
\end{align*}
as $n\rightarrow\infty$.\\
The second one can be attained the same way:
$$
\E{L_n} \leq
\left\{\begin{array}{ll}
\frac{gn^{2-a}}{2-a}-\frac{g^2n^{3-2a}}{(3-2a)(2-a)}+\frac{g^2n^{4-3a}}{(4-3a)(3-2a)(2-a)}+\BigO{1}&;\quad\text{for } a<\frac{4}{3},\\
\frac{gn^{2-a}}{2-a}-\frac{g^2n^{3-2a}}{(3-2a)(2-a)}+\frac{g^3 H_n}{2(a-1)^2} +\BigO{1}&;\quad\text{for } a=\frac{4}{3},\\
\frac{gn^{2-a}}{2-a}-\frac{g^2n^{3-2a}}{(3-2a)(2-a)}+\BigO{1}&;\quad\text{for } \frac{4}{3}<a<\frac{3}{2},\\
\frac{gn^{2-a}}{2-a}-\frac{g^2 H_n}{a-1} +\BigO{1}&;\quad\text{for } a=\frac{3}{2},\\
\frac{gn^{2-a}}{2-a}+\BigO{1}&;\quad\text{for } a>\frac{3}{2},
\end{array}\right.
$$
as \(n\rightarrow\infty\).\\
Resuming up all the cases, we obtain:
\begin{corollary}
If $g>0$ is fixed, $\alpha\in(1,2)$ and $\alpha_n=\min\{1,\frac{g}{n^{\alpha}}\}$, then
\begin{gather*}
\E{L_n}=\left\{\begin{array}{ll}
\frac{gn^{2-\alpha}}{2-\alpha}-\frac{g^2n^{3-2\alpha}}{(3-2\alpha)(2-\alpha)}+\BigO{\max\{\ln n,n^{4-3\alpha}\}}&;\quad\text{for } \alpha<\frac{3}{2},\\
\frac{gn^{2-\alpha}}{2-\alpha}-\frac{g^2 H_n}{\alpha-1} +\BigO{1}&;\quad\text{for } \alpha=\frac{3}{2},\\
\frac{gn^{2-\alpha}}{2-\alpha}+\BigO{1}&;\quad\text{for } \alpha>\frac{3}{2},
\end{array}\right.
\end{gather*}
as \(n\rightarrow\infty\).
\end{corollary}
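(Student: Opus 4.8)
The plan is to squeeze $\E{L_n}=\sum_{k=g_0+1}^{n}k\,\Pr[L_n=k]$ between the two estimates obtained by summing, against the weight $k$, the lower and upper Weierstrass-type bounds for $\Pr[L_n=k]$ established above, and then to check that these two estimates agree up to the error term claimed in each of the three regimes $\alpha<\tfrac32$, $\alpha=\tfrac32$, $\alpha>\tfrac32$. Since $\Pr[L_n=k]=0$ for $k\le g_0$, all sums run over $k\in\{g_0+1,\dots,n\}$, and the pointwise bounds read
\[
\frac{g}{k^{\alpha}}\bigl(1-g(H_{n,\alpha}-H_{k,\alpha})\bigr)\ \le\ \Pr[L_n=k]\ \le\ \frac{g}{k^{\alpha}}\Bigl(1-g(H_{n,\alpha}-H_{k,\alpha})+\tfrac{g^{2}}{2}(H_{n,\alpha}-H_{k,\alpha})^{2}\Bigr).
\]

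First I would multiply through by $k$ and sum, reducing the problem to the evaluation of three sums: $\sum_{k}k^{1-\alpha}$, $\sum_{k}k^{1-\alpha}(H_{n,\alpha}-H_{k,\alpha})$ and $\sum_{k}k^{1-\alpha}(H_{n,\alpha}-H_{k,\alpha})^{2}$. The first equals $H_{n,\alpha-1}-H_{g_0,\alpha-1}=\frac{n^{2-\alpha}}{2-\alpha}+\BigO{1}$ and supplies the main term $\frac{g\,n^{2-\alpha}}{2-\alpha}$. For the other two I would insert the stated expansion $H_{n,\alpha}-H_{k,\alpha}=\frac{k^{1-\alpha}-n^{1-\alpha}}{\alpha-1}+f^{(\alpha)}(k)+\BigO{1}$, with $|f^{(\alpha)}(k)|\le k^{-\alpha}$ (and, crucially, with the $\BigO{1}$ remainder actually of size $\BigO{n^{-\alpha}}$, so that it disappears after being weighted by $k^{1-\alpha}$ and summed over $k\le n$); each inner sum $\sum_{k}k^{b}$ that then appears is again evaluated by the same harmonic asymptotics, i.e.\ $\frac{n^{1+b}}{1+b}$ up to lower-order terms, with a logarithm in the borderline case $1+b=0$. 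In this way the linear correction $-g^{2}\sum_{k}k^{1-\alpha}(H_{n,\alpha}-H_{k,\alpha})$ breaks into a piece $+\frac{g^{2}n^{3-2\alpha}}{(\alpha-1)(2-\alpha)}$ (coming from the $-\frac{n^{1-\alpha}}{\alpha-1}H_{n,\alpha-1}$ part) and a piece $-\frac{g^{2}}{\alpha-1}\bigl(H_{n,2\alpha-2}-H_{g_0,2\alpha-2}\bigr)$, whose size switches at $2\alpha-2=1$: it is $\Theta(n^{3-2\alpha})$ for $\alpha<\tfrac32$, $\Theta(\ln n)$ for $\alpha=\tfrac32$, and $\BigO{1}$ for $\alpha>\tfrac32$. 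Combining the two pieces (using $\frac{1}{2-\alpha}-\frac{1}{3-2\alpha}=\frac{1-\alpha}{(2-\alpha)(3-2\alpha)}$) gives the lower bound $\frac{g\,n^{2-\alpha}}{2-\alpha}-\frac{g^{2}n^{3-2\alpha}}{(3-2\alpha)(2-\alpha)}+\BigO{1}$ for $\alpha<\tfrac32$, $\frac{g\,n^{2-\alpha}}{2-\alpha}-\frac{g^{2}H_n}{\alpha-1}+\BigO{1}$ for $\alpha=\tfrac32$, and $\frac{g\,n^{2-\alpha}}{2-\alpha}+\BigO{1}$ for $\alpha>\tfrac32$.

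For the upper bound the same computation reproduces exactly these expressions and adds one more contribution, from the quadratic term $\frac{g^{3}}{2}\sum_{k}k^{1-\alpha}(H_{n,\alpha}-H_{k,\alpha})^{2}$. After expanding the square and substituting the expansion of $H_{n,\alpha}-H_{k,\alpha}$ once more, its dominant part is governed by $\sum_{k}k^{1-\alpha}\cdot k^{2-2\alpha}=H_{n,3\alpha-3}-H_{g_0,3\alpha-3}$, whose order now switches at $3\alpha-3=1$, i.e.\ at $\alpha=\tfrac43$: it is $\Theta(n^{4-3\alpha})$ for $\alpha<\tfrac43$, $\Theta(\ln n)$ for $\alpha=\tfrac43$, and $\BigO{1}$ for $\alpha>\tfrac43$, while every other cross-term produced by the square is of strictly smaller order. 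Since $4-3\alpha<3-2\alpha$ for all $\alpha>1$, this extra term is a genuine lower-order perturbation: for $\alpha<\tfrac32$ it is swallowed by $\BigO{\max\{\ln n,\,n^{4-3\alpha}\}}$, and for $\alpha\ge\tfrac32$ it is $\BigO{1}$.

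Finally I would compare the two bounds. They carry the same main term $\frac{g\,n^{2-\alpha}}{2-\alpha}$ and, when $\alpha\le\tfrac32$, the same second term ($-\frac{g^{2}n^{3-2\alpha}}{(3-2\alpha)(2-\alpha)}$ for $\alpha<\tfrac32$ and $-\frac{g^{2}H_n}{\alpha-1}$ for $\alpha=\tfrac32$), and the gap between them has exactly the order recorded as the error in the statement; this yields the three displayed formulas. I expect the error-term bookkeeping to be the only real obstacle: keeping straight which of $n^{3-2\alpha}$, $n^{4-3\alpha}$, $\ln n$ and $\BigO{1}$ dominates as $\alpha$ crosses the thresholds $\tfrac43$ and $\tfrac32$, confirming that the $\BigO{1}$ remainders from the harmonic expansions really contribute only $\BigO{1}$ after weighting and summation, and verifying that the coefficients of the surviving terms — in particular the coefficient $-\frac{g^{2}}{\alpha-1}$ of $H_n$ at $\alpha=\tfrac32$ — coincide in the lower and the upper estimate, so that the difference is of strictly lower order than the last retained term.
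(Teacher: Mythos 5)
Your proposal is correct and follows essentially the same route as the paper's own derivation: squeeze $\E{L_n}$ between the weighted sums of the two Weierstrass-type bounds on $\Pr[L_n=k]$, evaluate them via the generalized-harmonic-number asymptotics, and split cases at $\alpha=\frac{4}{3}$ and $\alpha=\frac{3}{2}$, merging $\frac{g^2n^{3-2\alpha}}{(\alpha-1)(2-\alpha)}$ and $-\frac{g^2n^{3-2\alpha}}{(\alpha-1)(3-2\alpha)}$ exactly as you indicate. One small inaccuracy: after expanding $(H_{n,\alpha}-H_{k,\alpha})^2$, the cross term and the $n^{2-2\alpha}$ term are also of order $n^{4-3\alpha}$ rather than strictly smaller, but since $k^{1-\alpha}\geq n^{1-\alpha}$ their combined contribution is nonpositive, so your $\BigO{\max\{\ln n,\,n^{4-3\alpha}\}}$ control of the quadratic correction, and hence the stated error terms, still hold.
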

\paragraph{Remark}
A better generalization of Weierstrass' product inequality is known (ref??).
Notice that it could be helpful to get a more precise asymptotics for $\E{L_n}$, when $1<\alpha\leq\frac{4}{3}$. Nevertheless, it won't be thorough due to the approximations of the generalized harmonic numbers.
Unfortunately this calculation is very unpleasant.

\subsection{Quadratic Case}
\label{sec:quadratic}
In this section we consider the case when $\alpha_n = \min\{1,\frac{g}{n^2}\}$ for some fixed $g>0$.
For computational purpose let introduce some auxiliary symbols: $g_0:=\sqrt{g}$ and $h:=\lceil g_0 \rceil$. We analyze random variable given by $L_n:=n+1-K_n$. Our main goal is to achieve an approximation for its expected value.
It's easy to realize that $\Pr\left[L_n\geq k\right]=1$, when $k<h$. Otherwise, we use (\ref{eq:RecurrenceForKn2}) to reach $\Pr\left[L_n\geq k\right]=1-\prod\limits_{i=k}^{n}\left(1-\frac{g}{i^2}\right)$.

\begin{theorem}
Let $g>0$ and  $\alpha_n = \min\{1,\frac{g}{n^2}\}$. Then $\E{L_n}=g\ln{n}+O(1)$.
\end{theorem}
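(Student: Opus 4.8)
The plan is to compute $\E{L_n}$ via the tail-sum formula $\E{L_n}=\sum_{k\geq 1}\Pr[L_n\geq k]$, splitting the range at the threshold $h=\lceil\sqrt{g}\rceil$ below which $\Pr[L_n\geq k]=1$. So I would write
$$\E{L_n}=(h-1)+\sum_{k=h}^{n}\Pr[L_n\geq k]=(h-1)+\sum_{k=h}^{n}\left(1-\prod_{i=k}^{n}\left(1-\frac{g}{i^2}\right)\right),$$
using the formula for $\Pr[L_n\geq k]$ already established in this subsection from (\ref{eq:RecurrenceForKn2}). Since $h$ is a constant depending only on $g$, the first term is $\BigO{1}$ and the whole problem reduces to estimating $S_n:=\sum_{k=h}^{n}\bigl(1-\prod_{i=k}^{n}(1-g/i^2)\bigr)$.

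The core step is to pin down $P_{k,n}:=\prod_{i=k}^{n}(1-g/i^2)$ to enough precision. Taking logarithms, $\ln P_{k,n}=\sum_{i=k}^{n}\ln(1-g/i^2)=-g\sum_{i=k}^{n} i^{-2}+R_{k,n}$, where $|R_{k,n}|\leq \sum_{i=k}^{\infty}\BigO{i^{-4}}=\BigO{k^{-3}}$ uniformly (valid once $i^2>g$, i.e. $i\geq h$). Since $\sum_{i=k}^{n} i^{-2}=H_{n,2}-H_{k-1,2}$ and $H_{m,2}=\zeta(2)-1/m+\BigO{1/m^2}$, we get $\sum_{i=k}^{n} i^{-2}=\frac{1}{k-1}-\frac{1}{n}+\BigO{k^{-2}}=\frac1k-\frac1n+\BigO{k^{-2}}$. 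Hence $P_{k,n}=\exp\bigl(-g/k+g/n+\BigO{k^{-2}}\bigr)$, and therefore
$$1-P_{k,n}=1-e^{-g/k}e^{g/n}\bigl(1+\BigO{k^{-2}}\bigr).$$
For the sum I would split once more at a point like $k=\sqrt{n}$ (or any $m_n$ with $m_n\to\infty$, $m_n=\mathrm{o}(n)$): for $h\le k\le\sqrt n$ one has $1-P_{k,n}=\BigO{1/k}$, contributing $\BigO{\ln\sqrt n}=\BigO{\ln n}$ in total but — crucially — I actually want the leading term, so instead I'd keep the expansion $1-e^{-g/k}=g/k+\BigO{1/k^2}$ and write $1-P_{k,n}=g/k+\BigO{1/k^2}+\BigO{k/n}$ valid for all $k$ in range (the $\BigO{k/n}$ from expanding $e^{g/n}=1+\BigO{1/n}$ times $P_{k,n}\le1$, actually $\BigO{1/n}$, plus the cross term). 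Summing $g/k$ from $h$ to $n$ gives $g(H_n-H_{h-1})=g\ln n+\BigO{1}$; the error terms sum to $\sum_{k}\BigO{1/k^2}=\BigO{1}$ and $\sum_{k=h}^{n}\BigO{1/n}=\BigO{1}$. That yields $\E{L_n}=g\ln n+\BigO{1}$.

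The main obstacle is bookkeeping the error terms uniformly in $k$ across the whole summation range $h\le k\le n$ rather than for fixed $k$: the estimate $\ln(1-g/i^2)=-g/i^2+\BigO{i^{-4}}$ degrades as $i$ approaches $h$, so I must check the implied constants only blow up in a $g$-dependent (hence $\BigO{1}$) way, and verify that the tail $\sum_{i\ge k}i^{-4}=\BigO{k^{-3}}$ bound is genuinely uniform. A secondary subtlety is making sure the $e^{g/n}$ factor and the near-$k=n$ terms (where $P_{k,n}$ is close to $1$ and $1-P_{k,n}$ is small, not $\approx g/k$) are handled consistently — here it helps that for $k$ of order $n$ the claimed contribution $g/k$ is itself $\BigO{1/n}$ per term and $\BigO{1}$ in aggregate, so crude bounds suffice. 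Once the uniformity is nailed down, everything else is the routine harmonic-number asymptotics already quoted in the preliminaries.
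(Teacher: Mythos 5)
Your proposal is correct, but it reaches the estimate by a genuinely different route than the paper. The paper treats the two bounds asymmetrically: the upper bound comes from the Weierstrass product inequality, $1-\prod_{i=k}^n(1-g/i^2)\leq g\sum_{i\geq k}i^{-2}\leq g(\frac{1}{k-1}-\frac1n)$, while the lower bound uses $1-x\leq e^{-x}$ to reduce to $1-e^{g/(n+1)-g/k}$ and then compares $\sum_k e^{-g/k}$ with $\int e^{-g/x}\,dx$, which is evaluated through the exponential integral $\mathrm{Ei}$/$\mathrm{E}_1$ and its series expansion. You instead make a single two-sided expansion of $\ln\prod_{i=k}^n(1-g/i^2)=-g(H_{n,2}-H_{k-1,2})+\BigO{k^{-3}}$, exponentiate to get $1-P_{k,n}=g/k+\BigO{k^{-2}}+\BigO{1/n}$ uniformly, and sum directly to $g(H_n-H_{h-1})+\BigO{1}=g\ln n+\BigO{1}$. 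Your approach buys symmetry and elementarity — no special functions, only the harmonic-number asymptotics already quoted in the preliminaries — at the price of the uniformity bookkeeping you flag; that bookkeeping does go through, with one small caveat you should make explicit: if $g$ is a perfect square then $\alpha_h=1$ and the factor $i=h$ vanishes, so $\ln(1-g/h^2)$ is not available; start the logarithmic expansion at $i\geq h+1$, where $g/i^2\leq g/(h+1)^2<1$ is bounded away from $1$ in a $g$-dependent way, and absorb the single outer term $k=h$ (which is at most $1$) into the $\BigO{1}$. The paper's route, by contrast, needs no second-order control of the product (its upper bound is first-order Weierstrass), but pays for this with the $\mathrm{Ei}$ machinery and the constant $c_g$ in the lower bound; your version also makes it more transparent that the same expansion would recover the matching upper and lower constants simultaneously. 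Note also that the intermediate split at $k=\sqrt n$ in your write-up is superfluous — as you yourself observe, the uniform expansion handles the whole range $h\leq k\leq n$ at once — so it should be dropped from a final write-up.
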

\begin{proof}
Commence with some simple facts. The first uses the formula for expected value for non-negative discrete random variables:
\begin{gather}
\label{eq:ExpectedValueLn}
\E{L_n}=\sum\limits_{k=1}^{n}\Pr\left[L_n\geq k\right]=h-1+\sum\limits_{k=h}^{n}\left(1-\frac{g}{k^2}\right)~.
\end{gather}
The next two apply standard integral inequalities to find some confinements for sums:
\begin{gather}
\label{ineq:secondHarmonic}
-((n+1)^{-1} - k^{-1})=\int\limits_{k}^{n+1}x^{-2}dx\leq\sum\limits_{i=k}^{n}k^{-2}\leq \int\limits_{k-1}^{n}x^{-2}dx=-(n^{-1} - (k-1)^{-1})~
\end{gather}
and
\begin{gather}
\label{ineq:exponentialSum}
\sum\limits_{i=k}^{n}e^{-\frac{g}{k}}\leq \int\limits_{h}^{n+1}e^{-\frac{g}{x}}dx~.
\end{gather}
In the last formula we received integral, which isn't elementary function.
In order to use it, we need some additional properties of related known integrals.
It's known that 
$$
\int e^{-\frac{1}{x}}dx = \mathrm{Ei}\left(-\frac{1}{x}\right)+ x e^{-\frac{1}{x}}~,
$$
so 
\begin{gather}
\label{eq:ExponentialIntegral}
\int\limits_{h}^{n+1}e^{-\frac{g}{x}}dx=g\int\limits_{\frac{h}{g}}^{\frac{n+1}{g}}e^{-\frac{1}{x}}dx=g\left(\mathrm{Ei}\left(-\frac{1}{x}\right)+ x e^{-\frac{1}{x}}\right)\bigg|_{\frac{h}{g}}^{\frac{n+1}{g}}~.
\end{gather}
We also adduce the series expansion of exponential integral
\begin{gather}
\label{eq:seriesExpansion}
\mathrm{E}_1(x)=-\gamma-\ln{x} -\sum\limits_{k=1}^{\infty}\frac{(-1)^k x^k}{k! k}~.
\end{gather}

Now, we find an upper bound for $\Pr\left[L_n\geq k\right]$ (for $k\geq h$), using standard Weierstrass' inequality and (\ref{ineq:secondHarmonic}):
$$
1-\prod\limits_{i=k}^{n}\left(1-\frac{g}{i^2}\right)\leq 1-\left(1-\sum\limits_{i=k}^{n}\frac{g}{i^2}\right)\leq g\left(\frac{1}{k-1}-\frac{1}{n}\right)~.
$$
From above formula and (\ref{eq:ExpectedValueLn}), we obtain:
$$
\E{L_n}\leq h-1+g\sum\limits_{k=h}^{n}\left(\frac{1}{k-1}-\frac{1}{n}\right)=gH_n - gH_{h-2} +h-1-g+\BigO{n^{-1}}=g\ln n + \BigO{1}~.
$$

The second limitation is more difficult.
After removing the remainder of Taylor's formula for the function $e^x$, the inequality (\ref{ineq:secondHarmonic}) provides:
$$
\Pr\left[L_n\geq k\right]\geq 1-\prod\limits_{i=k}^{n}\left(1-\frac{g}{i^2}\right)\geq 1-\prod\limits_{i=k}^{n}e^{-\frac{g}{i^2}}=1-e^{-g\sum\limits_{i=k}^{n}\frac{1}{i^2}}\geq e^{\frac{g}{n+1}-\frac{g}{k}}~.
$$
Then we use the above part, the equality (\ref{eq:ExpectedValueLn}) once again and finally
(\ref{ineq:exponentialSum}) to attain:
$$
\E{L_n}\geq h-1+\sum\limits_{k=h}^{n}\left(1-e^{\frac{g}{n+1}-\frac{g}{k}}\right)=n-e^{\frac{g}{n+1}}\sum\limits_{k=h}^{n} e^{-\frac{g}{k}}\geq n-e^{\frac{g}{n+1}}\int\limits_{h}^{n+1}e^{-\frac{g}{x}}dx~.
$$
Remind that $\mathrm{Ei}(-y)=-\mathrm{E}_1(y)$ and by proceeding with (\ref{eq:ExponentialIntegral}) and (\ref{eq:seriesExpansion}) we procure:
\begin{align*}
\E{L_n}&\geq n-ge^{\frac{g}{n+1}}\left[-\mathrm{E}_1\left(\frac{g}{n+1}\right)+\mathrm{E}_1\left(\frac{g}{h}\right)+\frac{n+1}{g}e^{-\frac{g}{n+1}}-\frac{h}{g}e^{-\frac{g}{h}}\right]\\
&=n-ge^{\frac{g}{n+1}}\left[\ln\left(\frac{g}{n+1}\right)-\ln\left(\frac{g}{h}\right)+\frac{n+1}{g}e^{-\frac{g}{n+1}}-\frac{h}{g}e^{-\frac{g}{h}}+c_g+\BigO{\frac{1}{n}}\right]\\
&=-1+g\left(1+\BigO{\frac{1}{n}}\right)(\ln(n+1)-\ln(h))-gc_g+he^{-\frac{g}{h}}+\BigO{\frac{1}{n}}=g\ln n + \BigO{\frac{1}{n}}~,
\end{align*}
where $c_g=\sum_{k=1}^{\infty}\frac{(-1)^k \left(\frac{g}{h}\right)^k}{k! k}$.
\end{proof}
\paragraph{Remark}
It's possible to find neat limitations for $c_g$, because it's the alternating series. If we denote $n$-th partial sum of this series by $S_n$, then $c_g$ is between $S_r$ and $S_{r+1}$, where $r$ is a natural number, which satisfies $\frac{g}{h}^r<r! r$.
\paragraph{Remark}
The case $a=2$ is recognized as some kind of exception (due to appearance of $H_n$). We noticed that in fact, some pattern for expected value of $L_n$ extrapolates from the case, when $1<a<2$ on $a=2$.

\subsection{Superquadratic Case}

Finally, in this section we consider the case when $\alpha_n = \min\{1,\frac{g}{n^\alpha}\}$ for some fixed $g>0$ and $\alpha>2$. This case is the least interesting in our purpose, but we concisely consider $a>2$ for completeness. Again, we apply random variable $L_n:=n+1 - K_n$. The foregoing obvious fact is fulfilled:
\begin{gather*}
\Pr\left[L_n=k\right]=\alpha_k\prod\limits_{i=0}^{n-k-1}(1-\alpha_{n-i})\leq \alpha_k=\min\left\{1,\dfrac{g}{k^{\alpha}}\right\}~.
\end{gather*}
The foregoing formula provides an easy majorization for expected value:
\begin{gather*}
\E{L_n}\leq \sum_{k=1}^{\left\lfloor g^\frac{1}{\alpha} \right\rfloor}k + gH_{n,\alpha-1}<\frac{g^{\frac{1}{\alpha}}+1}{2}g^{\frac{1}{\alpha}} + g\zeta(\alpha-1)<\infty~.
\end{gather*}
\begin{corollary}
If $\alpha>2$, $g>0$ and $\alpha_n = \min\{1,\frac{g}{n^\alpha}\}$, then $\E{K_n}=n-\BigO{1}$.
\end{corollary}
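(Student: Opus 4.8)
The plan is to read the corollary off directly from the bound on $\E{L_n}$ established just before its statement, together with the definitional identity $K_n = (n+1) - L_n$. Taking expectations gives $\E{K_n} = (n+1) - \E{L_n}$, so it suffices to show that $\E{L_n}$ is bounded by a constant independent of $n$ (and, trivially, bounded below), i.e. that $\E{L_n} = \BigO{1}$.

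First I would justify the displayed majorization of $\E{L_n}$. Since $\Pr[L_n=k] = \alpha_k \prod_{i=0}^{n-k-1}(1-\alpha_{n-i}) \leq \alpha_k = \min\{1, g/k^\alpha\}$, splitting $\E{L_n} = \sum_{k=1}^{n} k\Pr[L_n=k]$ at the threshold $k_0 = \lfloor g^{1/\alpha}\rfloor$ yields, on the one hand, $\sum_{k=1}^{k_0} k\Pr[L_n=k] \leq \sum_{k=1}^{k_0} k = \frac{k_0(k_0+1)}{2} \leq \frac{g^{1/\alpha}(g^{1/\alpha}+1)}{2}$, and on the other, $\sum_{k=k_0+1}^{n} k\Pr[L_n=k] \leq g\sum_{k=k_0+1}^{n} k^{1-\alpha} \leq g\sum_{k=1}^{\infty} k^{-(\alpha-1)} = g\,\zeta(\alpha-1)$, the last series converging precisely because $\alpha-1 > 1$. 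Hence $\E{L_n} \leq \frac{g^{1/\alpha}(g^{1/\alpha}+1)}{2} + g\,\zeta(\alpha-1) =: C$, a finite constant not depending on $n$.

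Next, since $1 \leq K_n \leq n$ with probability one, we have $1 \leq L_n \leq n$, so in particular $\E{L_n} \geq 1$. Combining, $1 \leq \E{L_n} \leq C$ for every $n$, which means $\E{L_n} = \BigO{1}$ (and is also bounded away from $0$). Substituting into $\E{K_n} = (n+1) - \E{L_n}$ gives $\E{K_n} = (n+1) - \BigO{1} = n - \BigO{1}$, as claimed.

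I do not expect a genuine obstacle: the corollary is essentially a repackaging of the bound on $\E{L_n}$ already obtained in this subsection. The only point deserving a line of care is the convergence of $\sum_k k^{1-\alpha}$, which is exactly where the hypothesis $\alpha > 2$ (equivalently $\alpha - 1 > 1$) is used; for $\alpha \leq 2$ this tail sum diverges and the conclusion correctly fails, in agreement with the quadratic and subquadratic cases treated earlier.
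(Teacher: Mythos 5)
Your proposal is correct and follows essentially the same route as the paper: bounding $\Pr[L_n=k]$ by $\alpha_k$, splitting the sum for $\E{L_n}$ at $g^{1/\alpha}$, bounding the tail by $g\,\zeta(\alpha-1)$ (using $\alpha-1>1$), and then translating the resulting $\E{L_n}=\BigO{1}$ into $\E{K_n}=n-\BigO{1}$ via $K_n=(n+1)-L_n$. No gaps.
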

\paragraph{Remark}
$L_n$ is a number of update of algorithm which saved the last snapshot. We should expect that our algorithm save only data from the very beginning. In this case, the algorithm is redundant, so we don't consider this case anymore.
Nevertheless this is quite interesting mathematical problem to study.

\section{Applications}
\label{sec:Applications}

In this chapter we will discuss three examples of applications discussed in the previous chapters of probabilistic pointers. 

\subsection{Linear sampling}

Let us assume that we are observing data stream and that after reading $n$-th item we would like to have access to elements laying near points $\{\frac{k}{10}\cdot n: k=0,\ldots,10\}$. Of course, there is no problem with the element laying near $\frac{0}{n} \cdot n$ - it is sufficient to store the first element from the stream. As an element laying near $\frac{10}{10}\cdot n$ we may take the current item.
So we must propose some mechanism for dealing with remaining $9$ points.
     
Let us consider a series $K_n^{1}, \ldots K_n^{M}$ of independent random variables generated by the sequence $\alpha_n = \frac{1}{n}$. 
We know (see \ref{sec:linear}) that this is a sequence of independent random variables uniformly distributed in the set $\{1,\ldots,n\}$. 
Let $X_n= \{K_n^{1}, \ldots K_n^{M}\}$.
Let us fix some $0<\varepsilon < \frac{1}{10}$. Let $I_k = \{a\in\NN: |\frac{a}{n} - \frac{k}{10}| <\varepsilon\}$. 
Let us observe that $\Pr[I_k \cap X_n = \emptyset] \approx (1- 2 \varepsilon)^M$. This approximation is accurate for large $n$. So, for simplicity we shall assume that we have an equality. Therefore
$$
\Pr[\bigvee_{k=1}^{9} (I_k \cap X_n = \emptyset)] \leq 9\cdot (1- 2 \varepsilon)^M~. 
$$
The solution of inequality $9\cdot (1- 2 \varepsilon)^M \leq \eta$ is given by 
$M \geq \frac{\log \left(\frac{\eta }{9}\right)}{\log (1-2 \epsilon )}$. Putting in this formula $\varepsilon = \frac{1}{100}$ and
$\eta = 10^{-10}$ we get $M \geq 1248.5$. Therefore, if we take $M=1250$ snapshots then
$$
\Pr[\bigwedge_{k=1}^{9} (I_k \cap X_n \neq \emptyset)] > 1- \frac{1}{10^{10}}~.
$$
Hence, with a very high probability for each $k\in\{1,\ldots,10\}$ we are able to choose a point from the set $X_n$ which approximates $\frac{k n}{10}$ with precision $1\%$.  

We perform numerical experiments with this a collection of 1250 probabilistic snapshots 
$\mathcal{K}_n= (K_n^{1}, \ldots K_n^{1250})$ evolving independently according to the sequence $\alpha_n = \frac{1}{n}$. In the experiment whose results are shown in Fig. \ref{fig:LinearQuality} after each call to the procedure Update we calculated the quality of set of snapshots defined as
$$ 
Q(\mathcal{K}_n) = \frac{1}{n}\max\left\{\min\left\{\left|\frac{kn}{10}-K_n^j\right|: j=1,\ldots,1250\right\}: k=1,\ldots,9\right\}~.
$$  
\begin{figure}[th]
\centering
\includegraphics[width=0.75\columnwidth]{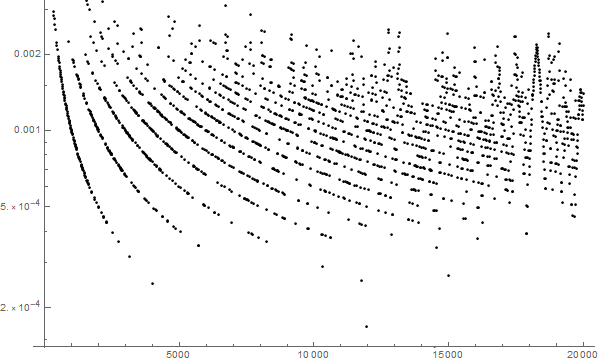}
\caption{Quality of approximation of points $\frac{kn}{10}$, $k = 1, \ldots, 9$,  by a collection of 1250 snapshots.}
\label{fig:LinearQuality}
\end{figure} 
We may observe in this figure several regularities which are connected with the specific method of generation of sequences $\mathcal{K}_n$ (for example, it is clear that sequences $\mathcal{K}_n$ and $\mathcal{K}_{n+1}$ are not independent of each other).  We can see that the predicted $1\%$ precision has been achieved.

\subsection{Bitcoin capitalization}
\begin{figure}[th]
\centering
\includegraphics[width=0.75\columnwidth]{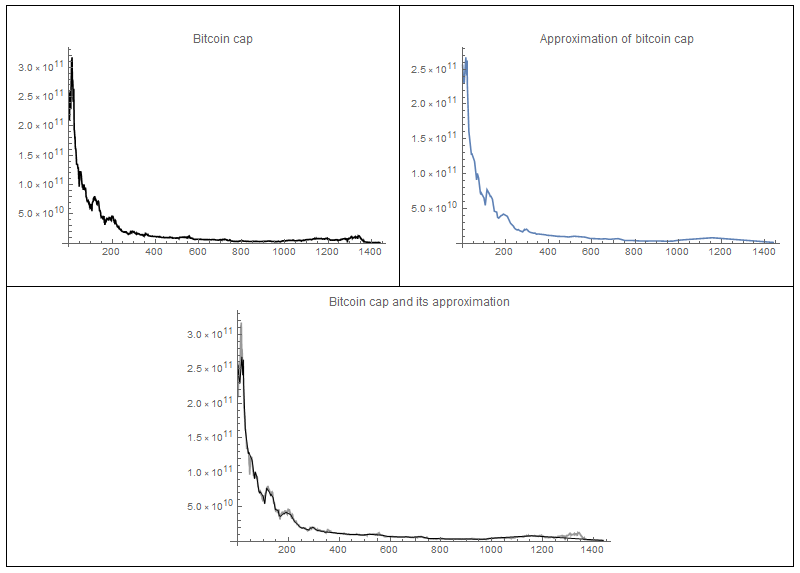}
\caption{Precision of bitcon cap approximation by 100 snapshots with $\EXP{0.1}$ distribution.}
\label{fig:BitcoinCap}
\end{figure}  
We used bitcoin cap data from \cite{bitcoincaps} to test probabilistic snapshots concentrated at the end of the data stream. The data stream consists of 1439 records from 2013-09-30 to 2017-12-31 (one record for one day). Since the length of this stream of data is relatively short, we decided to use only 100 probabilistic snapshots. W used the probability sequence $\alpha_n = \frac{0.1}{\sqrt{n}}$. From Sec. \ref{sec:sublinear} we know that in this case the expected value of snapshots is close to $10\sqrt{1439} \approx 380$.

We added to generated probabilistic snapshots to points: the first one, and the last one.
The results of this experiment is shown at Fig. \ref{fig:BitcoinCap}. Let us remark that the most left points at these diagrams corresponds to data from the day 2017-12-31 and the most right point to 2013-09-30 (so we reversed the typical order of such kind of diagrams). Note that despite the large fluctuations in the bitcoin market at the end of 2017, using only 100 snapshots makes it possible to reproduce the main trend of this parameter of the bitcoins market quite faithfully.

\paragraph*{Acknowledgements}
The authors would like to thank Zbigniew Go\l{}\k{e}biewski and Jakub Lemiesz for usefull hints and comments on the proofs presented in this paper.
The work is supported by the Polish National Science Center grant number 2013/09/B/ST6/02258 .

\bibliographystyle{unsrtnat}
\bibliography{streaming}

\end{document}